\definecolor{myrefcolor}{rgb}{0.,0.6,0.}
\DeclareMathOperator{\tr}{tr}
\newcommand{\proj}[1]{\ket{#1}\!\bra{#1}}
\newcommand{\bra}[1]{\left\langle #1 \right|}
\newcommand{\ket}[1]{\left| #1 \right\rangle}
\newcommand{\ketbra}[2]{\left|#1\middle\rangle\middle\langle#2\right|}
\newcommand{\eg}{\textit{e.g.}\@\xspace}
\newcommand{\ie}{\textit{i.e.}\@\xspace}
\newtheorem{lemma}{Lemma}
\def\be{\begin{equation}}
\def\ee{\end{equation}}
\begin{document}
\title{Inequivalence of entanglement, steering, and Bell nonlocality for general measurements}
\author{Marco T\'ulio Quintino}
\affiliation{Département de Physique Théorique, Université de Genève, 1211 Genève, Switzerland}
\author{Tamás Vértesi}
\affiliation{Institute for Nuclear Research, Hungarian Academy of Sciences,
H-4001 Debrecen, P.O. Box 51, Hungary}
\affiliation{Département de Physique Théorique, Université de Genève, 1211 Genève, Switzerland}
\author{Daniel Cavalcanti}
\affiliation{ICFO-Institut de Ciencies Fotoniques, Mediterranean
Technology Park, 08860 Castelldefels (Barcelona), Spain}
\author{Remigiusz Augusiak}
\affiliation{ICFO-Institut de Ciencies Fotoniques, Mediterranean
Technology Park, 08860 Castelldefels (Barcelona), Spain}
\author{Maciej Demianowicz}
\affiliation{ICFO-Institut de Ciencies Fotoniques, Mediterranean
Technology Park, 08860 Castelldefels (Barcelona), Spain}
\author{Antonio Ac\'in}
\affiliation{ICFO-Institut de Ciencies Fotoniques, Mediterranean
Technology Park, 08860 Castelldefels (Barcelona), Spain}
\affiliation{ICREA-Instituci\'o Catalana de Recerca i Estudis Avan\c cats, Lluis Companys 23, 08010 Barcelona, Spain}
\author{Nicolas Brunner}
\affiliation{Département de Physique Théorique, Université de Genève, 1211 Genève, Switzerland}


\date{\today}  

\begin{abstract}
Einstein-Podolsky-Rosen steering is a form of inseparability in quantum theory commonly acknowledged to be intermediate between entanglement and Bell nonlocality. However, this statement has so far only been proven for a restricted class of measurements, namely projective measurements. Here we prove that entanglement, one-way steering, two-way steering and nonlocality are genuinely different considering general measurements, \ie single round positive-operator-valued-measures. Finally, we show that the use of sequences of measurements is relevant for steering tests, as they can be used to reveal ``hidden steering''.
\end{abstract}

\maketitle

\section{Introduction}
The phenomenon of Einstein-Podolsky-Rosen (EPR) steering, first discussed by Schrödinger \cite{schrodinger36}, represents one form of nonlocality in quantum theory. Consider two distant observers sharing an entangled state. By performing a measurement on his system one observer can remotely steer the state of the system held by the other observer. Often discussed in the context of continuous variable quantum systems \cite{reid89,reid08}, EPR steering was put on firm grounds by Wiseman, Doherty and Jones \cite{wiseman07} who formalized the effect for arbitrary systems. 

Recently a growing interest has been devoted to the notion of steering. Methods for the detection \cite{cavalcanti09,schneeloch13,marciniak14,zukowski14} and quantification \cite{skrzypczyk14,pusey13,kogias14,aolita14} of steering were developed. Experimentally, loophole-free demonstrations of steering have been reported \cite{wittmann11}. Steering is also relevant in the context of quantum information processing \cite{branciard11,law14,piani14}. From a more fundamental viewpoint, steering was shown to be related to the incompatibility of quantum measurements \cite{uola14,quintino14}, and to be able to detect bound entanglement \cite{moroder14}.

Despite these advances it remains an open question whether steering is a form of quantum nonlocality that is inequivalent to entanglement or to Bell nonlocality \cite{bell64,brunner_review}. More precisely, it is unclear if there exist entangled states which are useless for steering, and whether there exist states useful for steering which cannot lead to Bell inequality violation. Although this inequivalence was shown to hold for the particular case of projective measurements \cite{wiseman07,almeida07,saunders10,sania14}, it may not persist if general measurements, not necessarily projective, are taken into account. Note that general measurements were shown to be useful in the context of nonlocality, where they can be used to increase the amount of violation of certain Bell inequalities \cite{vertesi10}. 

\begin{figure}[b]
\includegraphics[width=0.9\columnwidth]{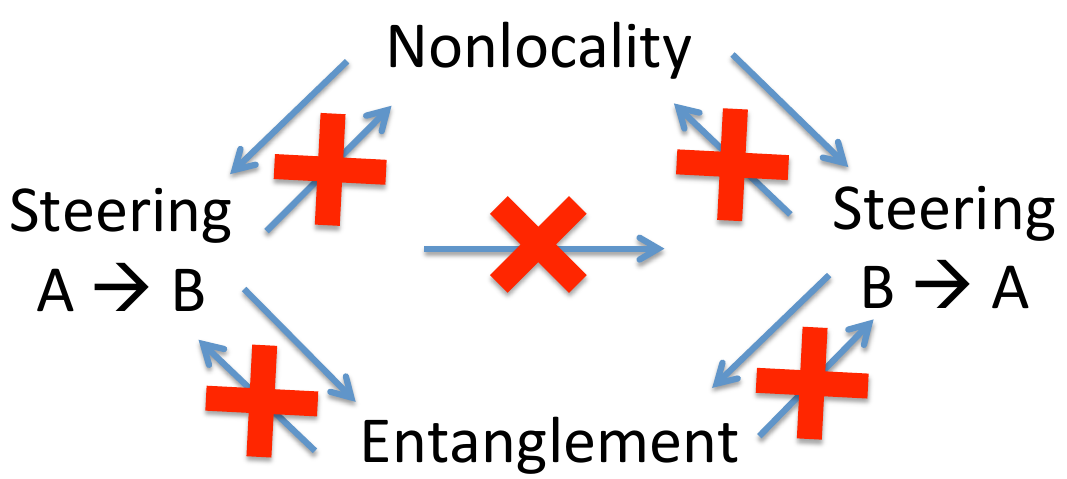} 
\caption{Relations between entanglement, one-way steering, two-way steering and nonlocality. Nonlocality implies two-way steerability (i.e from Alice to Bob and from Bob to Alice), while one-way steerability (\ie from Alice to Bob or from Bob to Alice) implies entanglement. Here we show that the converse relations do not hold, considering arbitrary POVMs (red crosses).}
\end{figure}

Let us also notice that contrary to entanglement and nonlocality, steering features a fundamental asymmetry in the sense that in a steering test the observers play a different role. It is then conceivable that there exist entangled states which are only one-way steerable in the sense that, say, Alice can steer Bob, but Bob cannot steer Alice.  
First investigated in the context of Gaussian systems \cite{midgley09,handchen12}, the effect of one-way steerability was demonstrated for simple two-qubit states, but only for projective measurements \cite{bowles14}. It is again open whether the phenomenon of one-way steering can be observed when general measurements are considered.

Here we show that entanglement, one-way steering, two-way steering, and Bell nonlocality are genuinely different. Specifically, considering here general measurements, we prove the existence of (i) entangled states that cannot lead to steering, (ii) states that can lead to steering but not to Bell nonlocality, and (iii) states which are one-way steerable but not two-way steerable. For each case, we provide a general method for constructing the corresponding states, and discuss explicitly simple examples.

Finally, we also consider the use of sequences of measurements in steering tests, and uncover a phenomenon of ``hidden steering''---by analogy to hidden nonlocality \cite{popescu95,hirsch13}---whereby steering can be activated using local pre-processing.

\section{Scenario} We consider two distant observers, Alice and Bob, performing local measurements on a shared entangled quantum state $\rho$. The measurements are described by POVMs $\{M_{a\vert x}\}$ and $\{M_{b\vert y}\}$ (where $M_{a\vert x} \geq 0$ and $\sum_a M_{a\vert x} =\mathbb{I}$, and similarly for $M_{b\vert y}$), where $x$ and $y$ denote the choice of the measurements and $a$ and $b$ their corresponding outputs. The corresponding probability distributions are given by 
\begin{equation} \label{quantum_prob}
p(ab\vert xy) = \tr(\rho M_{a\vert x}\otimes M_{b\vert y}) .
\end{equation}
The above distribution is termed Bell local when it admits a decomposition of the form
\begin{equation} \label{local}
p(ab\vert xy) = \int d\lambda \pi(\lambda) p_A(a\vert x,\lambda)p_B(b\vert y,\lambda),
\end{equation}
where $\lambda$ is some classical random variable, distributed according to density $\pi(\lambda)$, and $p_A(a\vert x,\lambda)$ and $p_B(b\vert y,\lambda)$ are local response functions. A quantum state $\rho$ is said to be local, or equivalently to admit a local hidden variable (LHV) model, when the statistics of arbitrary local measurements can be reproduced by a distribution of the form \eqref{local} (see Ref. \cite{augusiak_review}). On the contrary, if such a decomposition does not exist, the state is nonlocal and violates a Bell inequality for suitably chosen local measurements \cite{bell64,brunner_review}.

A different notion of nonlocality is that of EPR steering. In a steering test, Bob, who does not trust Alice, wants to verify that $\rho$ is entangled. To this end, he asks Alice to perform measurement $x$ on her subsystem and announce its result $a$.
By doing so, she remotely steers the state of Bob's system to
\begin{equation}\label{assembl}
\sigma_{a\vert x}=\tr_A(M_{a\vert x}\otimes \mathbb{I} \rho),
\end{equation}
where $\tr_A$ denotes the partial trace over Alice's system. Bob's task is now to ensure that the set of conditional states $\{\sigma_{a\vert x}\}$, a so-called assemblage, does not admit a decomposition of the form 

\begin{equation} \label{LHS}
\sigma_{a\vert x}=\int \mathrm{d}\lambda\pi(\lambda) p_A(a\vert x, \lambda)\sigma_\lambda,
\end{equation}
where $\lambda$ is a classical random variable distributed according to density $\pi(\lambda)$, $p_A(a\vert x, \lambda)$ is any possible local response function for Alice, and $\sigma_{\lambda}$ are some quantum states. If the assemblage observed by Bob (\eg via quantum tomography) does admit a decomposition \eqref{LHS}, Bob concludes that Alice could have cheated by using the following strategy: Alice would have sent the (single party, hence unentangled) quantum state $\sigma_\lambda$ to Bob, and announced measurement outcome $a$ according to the response function $p_A(a\vert x, \lambda)$; note that $\lambda$ can be understood here as Alice's choice of strategy. If an entangled state $\rho$ admits a decomposition of the form \eqref{LHS} for all possible measurements, the state is termed unsteerable (or equivalently, it admits a local hidden state (LHS) model). However, if the assemblage $\{\sigma_{a\vert x}\}$ does not admit a decomposition of the form \eqref{LHS}, the state is called steerable. In this case, steering can be detected via violation of a steering inequality \cite{cavalcanti09}. 

Note that LHS models correspond to a special class of LHV models in which the one of the response functions is ``quantum'' (see \cite{augusiak_review}). Hence, any state admitting an LHS model is local, while the converse may not be true. Moreover, due to the asymmetry of the concept of steering, it is in principle possible that there exist entangled states which are only one-way steerable.

The main goal of this work is to fully characterize the relation between entanglement, one-way steering, two-way steering, and nonlocality for general measurements. For the restricted class of projective measurements, all 4 notions are proven to be inequivalent \cite{werner89,wiseman07,bowles14}. However, when considering arbitrary POVMs, it is only known that entanglement is inequivalent to nonlocality \cite{barrett02}. Here we shall see that all 4 notions are inequivalent for POVMs. Specifically, we show by giving explicit examples that (i) there are entangled states that are unsteerable for POVMs, (ii) there exist steerable states admitting an LHV model for POVMs, and (iii) there exist states which are only one-way steerable for POVMs. Finally, we discuss the use of sequences of measurements in steering tests, and uncover the phenomenon of hidden steering.

\subsection{Entanglement vs steering} We first give a method for constructing classes of entangled state admitting an LHS model for POVMs. Specifically, starting from a given entangled state admitting an LHS model for projective measurements, we can construct a different state which (a) admits an LHS model for POVMs, and (b) is entangled. More formally, we can state the following:

\begin{lemma} \label{flag}
Consider an entangled state $\rho$ acting on the Hilbert space $\mathbbm{C}^d\otimes\mathbbm{C}^d$ and admitting an LHS model for projective measurements from Alice to Bob. Then, the state 
\begin{equation} \label{flagstate}
 \rho'=\frac{1}{d+1}  \big[ \rho + d  \, P_{\perp} \otimes \rho_B  \big] 
 \end{equation}
is entangled and admits an LHS model for POVMs, from Alice to Bob. Here, $P_{\perp} $ 
denotes a projector on a subspace that is orthogonal to the support of $\rho_A$, hence $\rho'$ acts on $\mathbbm{C}^{d+1}\otimes\mathbbm{C}^d$. Note that $\rho_{A,B}= \tr_{B,A}(\rho)$ denote the reduced states.
\end{lemma}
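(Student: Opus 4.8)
The plan is to verify the two claimed properties of $\rho'$ separately: (b) that $\rho'$ is entangled, and (a) that it admits an LHS model for POVMs from Alice to Bob.

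For entanglement, I would argue that any separable decomposition of $\rho'$ would force a separable decomposition of $\rho$, contradicting the hypothesis. The key observation is the block structure: $\rho'$ lives on $\mathbbm{C}^{d+1}\otimes\mathbbm{C}^d$, where Alice's $(d+1)$-dimensional space splits as $\mathrm{supp}(\rho_A)\oplus(\text{orthogonal complement containing }P_\perp)$. The term $\rho$ is supported entirely within $\mathrm{supp}(\rho_A)\otimes\mathbbm{C}^d$, while $P_\perp\otimes\rho_B$ is supported in the orthogonal block. Projecting a hypothetical separable decomposition of $\rho'$ onto the first block (applying $\Pi\otimes\mathbbm{I}$ with $\Pi$ the projector onto $\mathrm{supp}(\rho_A)$) yields a separable decomposition of $\frac{1}{d+1}\rho$ (up to normalisation), hence of $\rho$ — contradiction. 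I expect this part to be short and essentially routine.

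The substantive part is (a): constructing the LHS model for arbitrary POVMs on Alice's side. The idea is that a general POVM element $M_{a\vert x}$ on $\mathbbm{C}^{d+1}$ can be decomposed according to the block structure of Alice's space. Write $M_{a\vert x}$ in block form with respect to $\mathrm{supp}(\rho_A)\oplus\mathrm{supp}(\rho_A)^\perp$; because $\rho$ is supported on the first block and $P_\perp$ on the second, only the diagonal blocks of $M_{a\vert x}$ contribute to $\sigma_{a\vert x}=\tr_A(M_{a\vert x}\otimes\mathbbm{I}\,\rho')$. The weight $\frac{1}{d+1}$ on $\rho$ versus $\frac{d}{d+1}$ on $P_\perp\otimes\rho_B$ is chosen precisely so that the ``diagonal block restricted to $\mathrm{supp}(\rho_A)$'' of any POVM element, rescaled appropriately, behaves like (a sub-normalised) POVM element that one can feed into the given projective-measurement LHS model — this is the standard trick (going back to Barrett and to Hirsch–Quintino–Bowles–Brunner) by which a noisy enough state simulates POVMs using only projective-measurement simulability, since a POVM on a $d$-dimensional effective space, suitably depolarised, can be simulated by projective measurements. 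Concretely, I would show that the assemblage $\{\sigma_{a\vert x}\}$ produced by $\rho'$ equals a convex combination of: (i) an assemblage arising from $\rho$ under a valid (generalised) measurement that the projective LHS model handles, and (ii) a trivial ``constant'' assemblage proportional to $\rho_B$ coming from the $P_\perp$ term, which is manifestly of LHS form. Both pieces admit LHS decompositions \eqref{LHS}, and convex combinations of LHS assemblages are LHS, so $\rho'$ is unsteerable for POVMs.

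The main obstacle is getting the bookkeeping exactly right in step (a): one must check that the effective object extracted from $M_{a\vert x}$ on the $d$-dimensional block, after the $\frac{1}{d+1}$ rescaling and after absorbing the off-diagonal and $P_\perp$-block contributions, is genuinely a legitimate input to the projective LHS model — i.e. that the rescaling factor $d+1$ (equivalently, the ``amount of $P_\perp$'') is large enough to tame an arbitrary POVM on $\mathbbm{C}^{d}$ into something projectively-simulable, and that summing over $a$ still returns $\mathbbm{I}$ appropriately. This is where the specific coefficient $d$ in \eqref{flagstate} is forced, and where I would be most careful in a full write-up.
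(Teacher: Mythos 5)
Your entanglement argument is fine and is essentially the paper's: projecting Alice's system onto $\mathrm{supp}(\rho_A)$ is a local filter mapping $\rho'$ back to $\rho$, and a local filter cannot create entanglement from a separable state; the paper phrases it exactly this way. For the LHS part the paper gives no self-contained argument at all --- it simply invokes Protocol 2 of Ref.~\cite{hirsch13} --- so what you are reconstructing is the content of that protocol, and your overall plan (block-decompose $M_{a\vert x}$ and let the $P_\perp$ term supply the noise that tames the $d$-dimensional block) is the right one.

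However, the concrete step you propose would fail as written. Writing $\Pi$ for the projector onto $\mathrm{supp}(\rho_A)$ and $c_{a\vert x}=\tr(P_\perp M_{a\vert x}P_\perp)$, the assemblage produced by $\rho'$ is
\begin{equation}
\sigma_{a\vert x}=\frac{1}{d+1}\tr_A\!\big[(\Pi M_{a\vert x}\Pi\otimes\mathbb{I})\,\rho\big]+\frac{d}{d+1}\,c_{a\vert x}\,\rho_B ,
\end{equation}
so your piece (i) is the assemblage of $\rho$ under the POVM $\{\Pi M_{a\vert x}\Pi\}_a$ (these elements sum to $\mathbb{I}_d$). The hypothesis provides an LHS model for \emph{projective} measurements only, so you cannot assert that piece (i) is LHS --- whether projective unsteerability implies POVM unsteerability is precisely the open issue the lemma is designed to circumvent, and asserting that ``the projective LHS model handles'' a generalised measurement makes the argument circular. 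The two terms must be used jointly, not as two separately-LHS pieces of a convex combination: the point is that $\sigma_{a\vert x}$ equals the assemblage of $\rho$ under the noisy POVM $N_{a\vert x}=\frac{1}{d+1}\Pi M_{a\vert x}\Pi+\frac{d}{d+1}c_{a\vert x}\mathbb{I}_d$, and that this POVM is simulable with classical randomness, \emph{binary} projective measurements and post-processing. Explicitly, decompose $\Pi M_{a\vert x}\Pi=\sum_j\lambda_{aj}\proj{\phi_{aj}}$ with $0\le\lambda_{aj}\le1$; with probability $\lambda_{aj}/(d+1)$ Alice simulates $\{\proj{\phi_{aj}},\mathbb{I}-\proj{\phi_{aj}}\}$ via the assumed model and outputs $a$ on the first outcome, while on the second outcome (and in the leftover branch of probability $1/(d+1)$, since $\sum_{a,j}\lambda_{aj}=d$) she outputs $a'$ with probability $c_{a'\vert x}$, using $\sum_a c_{a\vert x}=1$. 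A one-line check shows this reproduces $N_{a\vert x}$ exactly, which is where the coefficient $d$ in \eqref{flagstate} is forced; absorbing the classical post-processing into Alice's response function then turns the projective-measurement LHS model into an LHS model for all POVMs, and also recovers the paper's remark that an LHS model for binary projective measurements already suffices. With the key step rephrased this way your proof goes through; as literally stated it has a genuine gap.
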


\begin{proof} Let us first notice that $\rho'$ is entangled by construction, as one can 
obtain $\rho$ from $\rho'$ by applying a local filter on Alice's side (an operation which cannot produce 
an entangled state from a separable one). Then, the construction of the LHS model for POVMs for $\rho'$ follows directly from the work of Ref. \cite{hirsch13} (see Protocol 2). Note that having an LHS model for binary projective measurements for $\rho$ is in fact enough for the result to hold.
\end{proof}

Applying the above method to known examples of entangled states admitting an LHS model for projective measurements (see Refs. \cite{werner89,wiseman07,almeida07,bowles14,sania14} for examples) allows one to construct entangled states admitting an LHS model for POVMs.

Another example worth mentioning is the Werner states acting on $\mathbbm{C}^d\otimes\mathbbm{C}^d$:
\begin{equation} \label{werner}
\rho_W = \alpha \frac{2 P^{anti}}{d (d-1)} + (1-\alpha) \frac{\mathbb{I}_{d^2}}{d^2}
\end{equation}
where $P^{anti}$ denotes the projector on the antisymmetric subspace, and $\mathbb{I}_{d^2}$ is the identity matrix in dimension $d^2$. It is known that $\rho_W$ is entangled for $\alpha> 1/(d+1)$ \cite{werner89}, and admits an LHV model for POVMs for $\alpha \leq \frac{3d-1}{d+1} (d-1)^{d-1} d^{-d} $ \cite{barrett02}. We point out that the model of Ref. \cite{barrett02} can actually be reformulated as an LHS model; see the supplementary material for details. 
Hence Werner states with $\frac{1}{d+1} < \alpha \leq \frac{3d-1}{d+1} (d-1)^{d-1} d^{-d}$ are also examples of entangled states which cannot be steered using POVMs\footnote{An interesting open question is whether all Werner states admitting an LHS model for projective measurements, \ie with $\alpha\leq \frac{d-1}{d}$, also admit an LHS model for POVMs. See \cite{werner14} for recent progress in this direction.}. Clearly, since Werner states are permutationally invariant, they are two-way unsteerable.

\subsection{Steering vs nonlocality} Before discussing examples of steerable states admitting an LHV model for POVMs, we first derive a useful property of LHS models. Consider a state $\rho$ admitting an LHS model from Alice to Bob (\ie Alice cannot steer Bob). Then, any local probabilistic transformation represented by a trace non-increasing completely positive (CP) map (for instance a local filtering) Bob may apply to his system must leave the global state $\rho$ unsteerable (from Alice to Bob). This fact, already noticed in Refs \cite{uola14,aolita14}, can be formally stated as follows.

\begin{lemma}\label{SLOCC}
	Let $\rho$ be an entangled state, unsteerable from Alice to Bob. For any local operation represented by an arbitrary trace non-increasing CP map $\Lambda$ acting on Bob's system, the final state 
\begin{equation}\label{filtered}
\rho_{F}=\frac{(I\otimes\Lambda)(\rho)}{\tr[(I\otimes\Lambda)(\rho)]},
\end{equation}
where $I$ stands for the identity map, is unsteerable from Alice to Bob. 
\end{lemma}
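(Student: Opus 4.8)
The plan is to show that any LHS model for $\rho$ is pushed forward, under the local CP map $\Lambda$ on Bob's side, to an LHS model for $\rho_F$ \emph{with the same local response functions for Alice}. First I would write down the assemblage produced by $\rho_F$ for an arbitrary POVM $\{M_{a\vert x}\}$ on Alice:
\begin{equation}
\sigma'_{a\vert x}=\tr_A\big(M_{a\vert x}\otimes\mathbb{I}\,\rho_F\big)=\frac{\Lambda(\sigma_{a\vert x})}{N},
\end{equation}
where $\sigma_{a\vert x}=\tr_A(M_{a\vert x}\otimes\mathbb{I}\,\rho)$ is the corresponding assemblage for $\rho$ and $N=\tr[(I\otimes\Lambda)(\rho)]=\tr[\Lambda(\rho_B)]$. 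The key observation here is that $\Lambda$, acting only on Bob, commutes both with the partial trace over Alice and with multiplication by Alice's POVM element, so applying $\Lambda$ to the global state and then forming the assemblage is the same as forming the assemblage and then applying $\Lambda$ to it.

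Next I would insert the LHS decomposition $\sigma_{a\vert x}=\int \mathrm{d}\lambda\,\pi(\lambda)\,p_A(a\vert x,\lambda)\,\sigma_\lambda$ guaranteed by the hypothesis and use linearity of $\Lambda$ to obtain
\begin{equation}
\sigma'_{a\vert x}=\frac{1}{N}\int \mathrm{d}\lambda\,\pi(\lambda)\,p_A(a\vert x,\lambda)\,\Lambda(\sigma_\lambda).
\end{equation}
Since $\Lambda$ is completely positive and trace non-increasing, $\Lambda(\sigma_\lambda)\geq0$ and $q(\lambda):=\tr[\Lambda(\sigma_\lambda)]\in[0,1]$. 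On the set where $q(\lambda)>0$ I would set $\sigma'_\lambda:=\Lambda(\sigma_\lambda)/q(\lambda)$ (a bona fide state) and $\pi'(\lambda):=\pi(\lambda)\,q(\lambda)/N$, discarding the harmless terms where $q(\lambda)=0$. This yields
\begin{equation}
\sigma'_{a\vert x}=\int \mathrm{d}\lambda\,\pi'(\lambda)\,p_A(a\vert x,\lambda)\,\sigma'_\lambda,
\end{equation}
which is an LHS decomposition reusing the \emph{same} response functions $p_A(a\vert x,\lambda)$ for \emph{every} measurement of Alice, so $\rho_F$ is unsteerable from Alice to Bob.

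The only thing needing a little care — and the closest thing to an obstacle — is checking that $\pi'$ is a genuine probability density. This follows by summing the original assemblage over $a$: $\sum_a\sigma_{a\vert x}=\rho_B$ together with $\sum_a p_A(a\vert x,\lambda)=1$ gives $\int\mathrm{d}\lambda\,\pi(\lambda)\sigma_\lambda=\rho_B$, hence $\int\mathrm{d}\lambda\,\pi'(\lambda)=\tr[\Lambda(\rho_B)]/N=1$; the null set $q(\lambda)=0$ is then trivially dealt with. Beyond this bookkeeping I expect no real difficulty, since the whole argument rests on the structural fact that $\Lambda$ transforms the hidden states independently of Alice's input $x$ and outcome $a$. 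Note also that the argument uses and preserves only the Alice-to-Bob direction, consistent with the one-way phrasing of the lemma.
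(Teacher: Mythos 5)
Your proof is correct and follows essentially the same route as the paper: push the LHS model through $\Lambda$, obtaining $\widetilde{\sigma}_{a\vert x}=\frac{1}{p_F}\Lambda(\sigma_{a\vert x})$, redefine the hidden states as $\Lambda(\sigma_\lambda)/\tr[\Lambda(\sigma_\lambda)]$ with reweighted density $\pi(\lambda)\tr[\Lambda(\sigma_\lambda)]/p_F$, and verify normalization using $\sum_a\sigma_{a\vert x}=\rho_B$. Your explicit treatment of the $\tr[\Lambda(\sigma_\lambda)]=0$ set is a small extra precaution the paper omits but changes nothing.
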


\begin{proof} 
Since $\rho$ admits an LHS model (from Alice to Bob) we have that any assemblage $\{\sigma_{a\vert x}\}$ generated from $\rho$ admits a decomposition of the form \eqref{LHS}. Using this fact, we will now prove that $\rho_F$ also admits an LHS model (from Alice to Bob).
To this end, let us first notice that an assemblage $\{\widetilde{\sigma}_{a\vert x}\}$ obtained 
from $\rho_F$ by Alice performing a measurement $\{M_{a|x}\}$
is related to $\{\sigma_{a|x}\}$ through
\begin{equation}\label{tilde}
\widetilde{\sigma}_{a\vert x}=\frac{1}{p_F}\Lambda(\sigma_{a\vert x})
%
\end{equation}
where $p_F=\tr[\Lambda(\rho_B)]$
%
%
with $\rho_B=\tr_A\rho$. After inserting Eq. (\ref{LHS})
into Eq. (\ref{tilde}) and then rearranging the terms, one obtains 
\begin{equation}
\widetilde{\sigma}_{a\vert x}=\int\mathrm{d}\lambda\frac{\pi(\lambda)\omega(\lambda)}{p_F}p(a\vert x,\lambda)\widetilde{\sigma}_{\lambda}
%
\end{equation}
where $\omega(\lambda)=\tr[\Lambda(\sigma_{\lambda})]$, and $\widetilde{\sigma}_{\lambda}=\Lambda(\sigma_{\lambda})/\omega(\lambda)$
is a normalized density matrix. To complete the proof, note that 
$\widetilde{\pi}(\lambda)=\pi(\lambda)\omega(\lambda)/p_F$ is a proper probability density. In particular, 
it follows from Eqs. (\ref{assembl}) and (\ref{LHS}) that $\rho_{B}=\sum_a\sigma_{a\vert x}=\int\mathrm{d}\lambda\pi(\lambda)\sigma_\lambda$, and therefore $(1/p_F) \int\mathrm{d}\lambda\pi(\lambda)\omega(\lambda)=1$.
\end{proof}

A relevant corollary of Lemma \ref{SLOCC} is the following. Consider again an entangled state $\rho$ and a local operation on Bob's side. If Alice can steer Bob with the final state $\rho_F$, then the initial state $\rho$ must also be steerable (from Alice to Bob). Moreover, if $\rho_F$ violates a given steering inequality, one can construct another steering inequality which can be violated by the initial state $\rho$ (see the supplementary material for details). A particular example of such local operation $\Lambda$ is the local filtering where $\Lambda(\cdot)=F(\cdot)F^{\dagger}$ with $F$ being any matrix satisfying $F^{\dagger}F\leq \mathbb{I}$.

We are now in position to discuss our examples of steerable states, which are nevertheless local for all POVMs. Consider the class of states of the form
\begin{align}\label{GM}
 \rho_{G} & =  \frac{1}{9} \big[ q \ket{\psi^-}\bra{\psi^-} + (3-q) \frac{\mathbb{I}_2}{2}  \otimes \ketbra{2}{2} \nonumber   \\ & +  2q \ketbra{2}{2} \otimes \frac{\mathbb{I}_2}{2} + (6-2q) \ketbra{22}{22} \big], 
\end{align}	
where $\ket{\psi^-}= (\ket{01}-\ket{10})/\sqrt{2}$ is the two-qubit singlet state, and $\mathbb{I}_2$ denotes the identity in the qubit subspace $\{ \ket{0},\ket{1} \}$. For $0< q \leq 1/2$, these states are proven to be local for POVMs \cite{hirsch13}. 

We will now see that $\rho_G$ is steerable (in both directions) for $0< q \leq 1$. Notice that if Alice applies a local filtering on the qubit subspace $\{ \ket{0},\ket{1} \}$, the filtered state is of the form

\begin{equation}  \label{erasure}
\rho_{G}^F = \alpha \ket{\psi^-}\bra{\psi^-}  + (1-\alpha)  \frac{\mathbb{I}_2}{2}\otimes\ketbra{2}{2}
\end{equation}
with $\alpha= q/3$. Note that states of the form \eqref{erasure} are so-called ``erasure states'' (as they can be obtained by sending half of a singlet state $\ket{\psi^-}$ through an erasure channel). If Bob applies a local filtering on the qubit space, the filtered state is also an erasure state (where the subsystems are swapped), with $\alpha=1/3$. Since the erasure state is steerable (in both directions) for $0< \alpha \leq 1$, it follows from the corollary to Lemma 1 that the state $\rho_G$ with $0< \alpha \leq 1/2$ is two-way steerable but local for all POVMs.

The steerability of the erasure state for any $0< \alpha \leq 1$ deserves a few more explanations. First notice that steerability from Alice to Bob follows again from our corollary: when Bob applies a projection on the qubit subspace $\{ \ket{0},\ket{1} \}$, the filtered state is simply the pure singlet state $\ket{\psi^-}$, which is clearly steerable. Steering from Bob to Alice can be demonstrated by considering an explicit family of steering inequalities \cite{skrzypczyk15}.

\subsection{One-way steering for POVMs} We give here a simple technique for constructing states that are unsteerable from Alice to Bob for arbitrary POVMs, but steerable from Bob to Alice. Notice that by construction such states are local for POVMs. The technique will then be illustrated by an example. 

The idea of the method is to start from a state that is one-way steerable (that is steerable from Bob to Alice, but not from Alice to Bob) for projective measurements (examples were provided in \cite{bowles14}), and then construct a state that is one-way steerable for POVMs. More formally we have that:

\begin{lemma}  \label{POVM_EPR}
Let $\rho$ be a quantum state acting on the Hilbert space $\mathbbm{C}^d\otimes\mathbbm{C}^d$ such that Alice cannot steer Bob with projective measurements but Bob can steer Alice. Then, the state $\rho'$, defined as in Eq. \eqref{flagstate}, is such that Alice cannot steer Bob using arbitrary POVMs, but Bob can steer Alice. 
\end{lemma}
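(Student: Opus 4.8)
The plan is to establish the two assertions separately, each by reduction to a result already in hand. The claim that Alice cannot steer Bob in $\rho'$ with arbitrary POVMs is nothing but Lemma~\ref{flag}: its hypothesis---$\rho$ entangled and admitting an LHS model for projective measurements from Alice to Bob---is exactly what is assumed here, since $\rho$ is entangled because any state Bob can steer is entangled, and ``Alice cannot steer Bob with projective measurements'' is precisely the LHS condition (and, as noted after Lemma~\ref{flag}, an LHS model for binary projective measurements already suffices). Hence the flag construction \eqref{flagstate} yields a $\rho'$ admitting an LHS model for POVMs from Alice to Bob, \ie Alice cannot steer Bob with any POVMs; in particular $\rho'$ is then local for POVMs, an LHS model being a special case of an LHV model.

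For the second assertion---that Bob can steer Alice in $\rho'$---the key point is that $\rho$ is recovered from $\rho'$ by a local filtering on Alice's side. Let $\Pi_A$ be the projector onto the support of $\rho_A$; since $P_{\perp}$ is by construction orthogonal to $\mathrm{supp}(\rho_A)$, we have $\Pi_A P_{\perp}=0$ and hence $(\Pi_A\otimes\mathbb{I})\,\rho'\,(\Pi_A\otimes\mathbb{I})=\rho/(d+1)$, which normalizes to $\rho$; moreover $\Pi_A^{\dagger}\Pi_A=\Pi_A\leq\mathbb{I}$, so this is a bona fide trace non-increasing CP map on Alice's system. I would then invoke Lemma~\ref{SLOCC} with the roles of Alice and Bob exchanged and in its contrapositive (``corollary'') form: if a state obtained by a local filtering on Alice's side is steerable from Bob to Alice, then so was the state before filtering. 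Since $\rho$ is steerable from Bob to Alice by assumption, $\rho'$ is steerable from Bob to Alice, as required.

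The only delicate point is getting the use of Lemma~\ref{SLOCC} right in the second step: it has to be applied both with Alice and Bob swapped and in the ``corollary'' direction, with the filter acting on the party being steered rather than on the party performing the steering. The remainder is bookkeeping---in particular one should check that $\Pi_A$ returns exactly $\rho$, which is precisely where the orthogonality of $P_{\perp}$ to $\mathrm{supp}(\rho_A)$ built into \eqref{flagstate} is used---so I do not expect any substantial further obstacle once the direction of the reduction is pinned down.
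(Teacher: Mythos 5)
Your proposal is correct and follows essentially the same route as the paper: the first claim is exactly Lemma~\ref{flag}, and the second is the corollary (contrapositive) of Lemma~\ref{SLOCC} with the roles of Alice and Bob exchanged, using the local filter $\Pi_A$ onto the support of $\rho_A$ to recover $\rho$ from $\rho'$. Your explicit check that $(\Pi_A\otimes\mathbb{I})\rho'(\Pi_A\otimes\mathbb{I})\propto\rho$ just fills in a detail the paper leaves implicit.
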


\begin{proof}
First, notice that since $\rho$ admits an LHS model for projective measurements from Alice to Bob, then the state $\rho'$ admits an LHS model for POVMs from Alice to Bob, which follows from Lemma \ref{flag}. Second, there exists a local operation that allows Alice to map the state $\rho'$ to $\rho$, which, by assumption, is steerable from Bob to Alice. Hence, it follows from Lemma \ref{SLOCC} that $\rho'$ is steerable from Bob to Alice.
\end{proof}

To provide an explicit example of a state featuring one-way steering for POVMs, we first consider the state
\begin{equation}\label{oneway_state}
\rho_{1W}= \frac{1}{2}  \big[ \ket{\psi^-}\bra{\psi^-} +  \frac{3}{5}\ketbra{1}{1}\otimes \frac{\mathbb{I}_2}{2} + \frac{2}{5} \frac{\mathbb{I}_2}{2}\otimes \ketbra{0}{0} \big]
\end{equation}
which cannot be steered from Alice to Bob, considering projective measurements, but Bob can steer Alice using 13 well-chosen measurements \cite{bowles14}. By applying the above Lemma, we can construct the state 

\begin{equation}
\rho'_{1W}= \frac{1}{3} \rho_{1W} + \frac{2}{3} \ketbra{2}{2}\otimes \rho_{1W,B}
\end{equation}
which is one-way steerable for POVMs.

\section{Sequences of measurements} We have shown above that entanglement, one-way steering, two-way steering, and nonlocality are genuinely different when considering general measurements. A natural question is to see whether the relations between these notions of quantum nonlocality change when states are subjected to sequences of measurements. Sequences of measurements are relevant in the context of Bell nonlocality as they allow one to detect the nonlocal properties of quantum states that have a LHV model for general measurements, a phenomenon known as ``hidden nonlocality'' \cite{popescu95,hirsch13}. Below, we show that a similar effect is possible for steering. Specifically, we demonstrate that a state which admits an LHS model for POVMs (in both directions), can lead to steering when Alice and Bob can perform a sequence of measurements. 

Consider the Werner state \eqref{werner} with $\alpha = (d-1)/d$, denoted $\tilde{\rho}_W$, which admits an LHS model for projective measurements \cite{werner89}. Now, using Lemma \ref{flag} twice in a row (first performing the extension on Alice's side, and then on Bob's side), we obtain the following state
\begin{align}
\rho_{HS} =& \frac{1}{d^2} \big[   \tilde{\rho}_W  + d ( P_{\perp} \otimes \frac{\mathbb{I}_d}{d} + \frac{\mathbb{I}_d}{d} \otimes P_{\perp})  + d^2 P_{\perp} \otimes P_{\perp} \big],
\end{align}
where $P_{\perp}$ is a projector on a subspace orthogonal to the support of the reduced states of Alice and Bob. By construction $\rho_{HS} $ admits an LHS model for POVMs from Alice to Bob and from Bob to Alice. Consider now applying the local filters $F_A = F_B = \ket{0}\bra{0} + \ket{1}\bra{1}$, \ie projections onto a qubit subspace, on both Alice's and Bob's side. The resulting state is given 
\begin{equation}
\frac{(F_A \otimes F_B) \rho_{HS} (F^{\dagger}_A \otimes F^{\dagger}_B)}{ \tr[(F_A \otimes F_B) \rho_{HS} (F^{\dagger}_A \otimes F^{\dagger}_B)]}
 = \frac{1}{1+\frac{2}{d}} \big[ \ket{\psi^-}\bra{\psi^-} + \frac{2}{d} \frac{\mathbb{I}_4}{4} \big]
\end{equation}
which is steerable (in both directions) for any $d\geq 3$ \cite{wiseman07}. Hence, the state $\rho_{HS}$ has hidden steering. 

Note that the notion of hidden steering is intimately related to one-way steerability. In fact, novel examples of states with one-way steering can be easily constructed from the above example of hidden steering (see supplementary material).

\section{Discussion} We have shown that entanglement, steering, and nonlocality are inequivalent when general measurements are considered. The natural question is now to see how these notions relate to each other when sequences of measurements are allowed. While we are not in position to give a final answer, we could nevertheless show that sequences of measurements are relevant for demonstrating steering. More generally, it is in fact not known whether entanglement and nonlocality are strictly inequivalent in this case, despite recent progress \cite{masanes08,vertesi14}.

\textit{Acknowledgements.} We thank Joe Bowles, Flavien Hirsch, and Paul Skrzypczyk for discussions. We acknowledge financial support from the Swiss National Science Foundation (grant PP00P2\_138917 and Starting Grant DIAQ), SEFRI (COST action MP1006), the 
J\'anos Bolyai Programme, the OTKA (K111734), the ERC CoG QITBOX and AdG OSYRIS, the EU project SIQS, the Spanish project FOQUS, the John Templeton Foundation and the Generalitat de Catalunya. R. A is supported by the Spanish MINECO through the Juan de la Cierva scholarship and D. C. is supported by the Beatriu de Pin\'os fellowship (BP-DGR 2013).



\appendix

\section{Converting Barrett's LHV model for Werner states into an LHS model}

We show that the Werner states discussed in the main text admit an LHS model for arbitrary non-sequential POVMs. This is done by showing that Barrett's model \cite{barrett02} for simulating POVMs on these Werner states can be straightforwardly transformed into an LHS model. 

Without going into full details about the model, we recall that the shared variable $\lambda$ can be viewed as a quantum state of dimension $d$: $\ket{\lambda}$. Alice's response function given by 
\begin{equation}
p_A(a\vert M_{a|x},\lambda) = \frac{\alpha_{a|x} }{d-1}(1- \tr[\ketbra{\lambda}{\lambda}  P_{a|x}]),
\end{equation}
where $P_{a|x}$ is a rank one projector defined by $M_{a|x}=\alpha_{a|x} P_{a|x}$ the POVM elements. By noticing that

\begin{align}
\frac{\alpha_{a|x} }{d-1}(1- \tr[\ketbra{\lambda}{\lambda} P_{a|x}]) &= \frac{\alpha_{a|x} }{d-1}\tr[(\mathbb{I}_d- \ketbra{\lambda}{\lambda}) P_{a|x}]  \nonumber \\
&= \tr\left( \frac{\mathbb{I}_d- \ketbra{\lambda}{\lambda} }{d-1} \alpha_{a|x}P_{a|x}\right) \nonumber \\
&= \tr\left( \frac{\mathbb{I}_d- \ketbra{\lambda}{\lambda} }{d-1} M_{a|x}\right), 
\end{align}	 
one can define a new shared variable, as the quantum state $\sigma_{\lambda}=\frac{1 }{d-1}(\mathbb{I}_d- \ketbra{\lambda}{\lambda}) $, in order to transform the initial LHV model into an LHS model.

\section{Constructing families of steering inequalities}

Here we provide a simple method for constructing a steering inequality 
violated by a state $\rho$, starting from a steering inequality violated by the transformed state 
\begin{equation}\label{Ginevra}
\rho_{F}=\frac{1}{p_F}(I\otimes\Lambda)(\rho)
\end{equation}
where $\Lambda$ is any trace non-increasing completely positive map and 
$p_F=\tr[(I\otimes \Lambda)(\rho)]$ is the probability that a quantum operation represented by this map 
has been successfully implemented. For this purpose, let us first prove the following Lemma.

\begin{lemma}\label{SLOCC2}

Consider the steering inequality 
	\begin{equation}\label{Genf}
	\tr\sum_{a,x} \Gamma_{a\vert x} \sigma_{a\vert x}^{uns} \leq 0,
	\end{equation} 
	where $\Gamma_{a\vert x}$ are the operators characterizing the inequality (see \cite{skrzypczyk14} for details), and $\{\sigma_{a\vert x}^{uns}\}$ an arbitrary unsteerable assemblage; notice that by redefining the operators $\Gamma_{a\vert x}$ one can always set the bound of any steering inequality to zero. Then, for any completely positive map 
	$\Lambda$,
	\begin{equation} \label{Geneva}
	\tr\sum_{a,x} [\Lambda^{\dagger}(\Gamma_{a\vert x}) \sigma_{a\vert x}^{uns}] \leq 0,
	\end{equation} 
is also a steering inequality. Here by $\Lambda^{\dagger}$ we denote a dual map of $\Lambda$\footnote{A dual map $\Lambda^{\dagger}$ of some linear map $\Lambda$ is one that satisfies $\tr[X\Lambda(Y)]=\tr[\Lambda^{\dagger}(X)Y]$ for any pair of matrices $X,Y$.}.
\end{lemma}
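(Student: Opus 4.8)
The plan is to show that the inequality \eqref{Geneva} is saturated (or satisfied with the correct sign) by \emph{every} unsteerable assemblage, which is exactly what it means for it to be a valid steering inequality. So first I would recall the definition: a linear functional $\{\Gamma_{a|x}\}$ defines a steering inequality with bound zero precisely when $\tr\sum_{a,x}\Gamma_{a|x}\tau_{a|x}\le 0$ holds for all assemblages $\{\tau_{a|x}\}$ admitting an LHS decomposition of the form \eqref{LHS}. Since the LHS condition is convex, and the extremal LHS assemblages are those of the form $\tau_{a|x}=p_A(a|x,\lambda)\,\sigma$ for a fixed deterministic response function and a fixed state $\sigma$ (the general case being a convex mixture over $\lambda$), it suffices to verify \eqref{Geneva} on these extreme points.

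The key computation is then a one-line use of the duality relation $\tr[X\Lambda(Y)]=\tr[\Lambda^\dagger(X)Y]$. Take an arbitrary unsteerable assemblage $\{\sigma^{uns}_{a|x}\}$ with LHS decomposition $\sigma^{uns}_{a|x}=\int \mathrm{d}\lambda\,\pi(\lambda)p_A(a|x,\lambda)\sigma_\lambda$. Then
\begin{equation}
\tr\sum_{a,x}\Lambda^\dagger(\Gamma_{a|x})\,\sigma^{uns}_{a|x}
=\tr\sum_{a,x}\Gamma_{a|x}\,\Lambda\!\left(\sigma^{uns}_{a|x}\right)
=\tr\sum_{a,x}\Gamma_{a|x}\,\tau_{a|x},
\end{equation}
where I have set $\tau_{a|x}:=\Lambda(\sigma^{uns}_{a|x})=\int\mathrm{d}\lambda\,\pi(\lambda)p_A(a|x,\lambda)\,\Lambda(\sigma_\lambda)$. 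Since $\Lambda$ is completely positive and trace non-increasing, $\Lambda(\sigma_\lambda)$ is a (subnormalised) positive operator, so writing $\omega(\lambda)=\tr[\Lambda(\sigma_\lambda)]$ and $\widetilde\sigma_\lambda=\Lambda(\sigma_\lambda)/\omega(\lambda)$, the family $\{\tau_{a|x}\}$ is of the form $\int\mathrm{d}\lambda\,[\pi(\lambda)\omega(\lambda)]\,p_A(a|x,\lambda)\,\widetilde\sigma_\lambda$ — exactly an (unnormalised) LHS assemblage, just as in the proof of Lemma~\ref{SLOCC}. Because the original inequality \eqref{Genf} holds for all unsteerable assemblages, and scaling the whole assemblage by the positive constant $\int\mathrm{d}\lambda\,\pi(\lambda)\omega(\lambda)$ does not change the sign, we get $\tr\sum_{a,x}\Gamma_{a|x}\tau_{a|x}\le 0$, hence \eqref{Geneva} holds.

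The only subtlety — and the point I would be most careful about — is the normalisation bookkeeping: $\Lambda$ being merely trace non-increasing means $\{\tau_{a|x}\}$ need not be a properly normalised assemblage (its marginal $\sum_a\tau_{a|x}$ has trace $\le 1$ rather than $=1$). I would address this by noting that a steering inequality of the zero-bound form \eqref{Genf} is really a statement about the cone of LHS assemblages rather than only the normalised slice: any nonnegative rescaling of an LHS assemblage is again LHS (replace $\pi$ by a subnormalised measure), and the functional $\{\Gamma_{a|x}\}$ is nonpositive on this entire cone. Thus $\{\tau_{a|x}\}$ lies in the cone and the inequality applies directly, with no need to renormalise. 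This is precisely the structural feature that makes the zero-bound normalisation (mentioned in the statement) convenient. With that observed, the proof is complete; I would end by remarking that applying this with the transformed state $\rho_F$ of Eq.~\eqref{Ginevra} and using $\sigma^{uns}_{a|x}\mapsto$ the assemblage produced by $\rho$ shows that whenever $\rho_F$ violates \eqref{Genf}, the state $\rho$ violates \eqref{Geneva}, which is the corollary advertised after Lemma~\ref{SLOCC}.
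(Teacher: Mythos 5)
Your proposal is correct and follows essentially the same route as the paper: apply the duality relation $\tr[\Lambda^{\dagger}(\Gamma_{a|x})\sigma^{uns}_{a|x}]=\tr[\Gamma_{a|x}\Lambda(\sigma^{uns}_{a|x})]$, invoke the observation from the proof of Lemma~\ref{SLOCC} that $\Lambda$ maps an LHS assemblage to a (subnormalised) LHS assemblage, and conclude by nonnegativity of the overall scaling factor. The only (cosmetic) difference is that you handle normalisation by noting the zero-bound inequality is homogeneous on the LHS cone, whereas the paper explicitly renormalises by $p_F=\tr[\Lambda(\rho_B)]\geq 0$; your preliminary reduction to extremal assemblages is unnecessary but harmless.
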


\begin{proof} First, using the definition of a dual map, one can rewrite 
the left-hand side of inequality \eqref{Geneva} as
\begin{equation}
\tr\sum_{a,x} [\Lambda^{\dagger}(\Gamma_{a\vert x}) \sigma_{a\vert x}^{uns}]=
p_F\tr\sum_{a,x} \left[\Gamma_{a\vert x} \frac{1}{p_F}\Lambda(\sigma_{a\vert x}^{uns})\right],
\end{equation}
where $p_F=\tr[\Lambda(\rho_B)]$ with $\rho_B$ denoting the second subsystem of an unsteerable state $\rho$ giving
$\{\sigma_{a|x}^{\mathrm{uns}}\}$. Then, as shown in the proof of Lemma 2, for any unsteerable assemblage $\{\sigma_{a|x}^{\mathrm{uns}}\}$ the operators $\widetilde{\sigma}_{a\vert x}^{\mathrm{uns}}=\Lambda(\sigma_{a\vert x}^{\mathrm{uns}})/p_F$ form another unsteerable assemblage that corresponds to the state $\rho_F$ given in Eq. \eqref{Ginevra}.

This means that inequality (\ref{Genf}) is satisfied for $\{\widetilde{\sigma}_{a\vert x}^{\mathrm{uns}}\}$, which together with the fact that $p_F\geq 0$, implies (\ref{Geneva}). 
This completes the proof.
\end{proof}
Now, let us see how this method works in practice. Assume that a state $\rho_{F}$ given by 
Eq. (\ref{Ginevra}) violates some steering inequality (\ref{Genf}) by the amount
	\begin{equation}
	\beta_{F}= \tr\sum_{a,x} \Gamma_{a\vert x} \tr_A[M_{a|x}\otimes \mathbb{I} \rho_F]  >0.
	\end{equation} 
	Then, it clearly follows that $\rho$ violates the following inequality
	\begin{equation} \tr\sum_{a,x} [\Lambda^{\dagger}(\Gamma_{a\vert x}) \sigma_{a\vert x}^{uns}] \leq 0,
	\end{equation} 
which, as proven in Lemma \ref{SLOCC2}, is a proper steering inequality,
and the amount of violation is 
\begin{equation}
\tr\sum_{a,x} \left[\Lambda^{\dagger}(\Gamma_{a\vert x}) \tr_A(M_{a|x}\otimes \mathbb{I} \rho)\right] = \beta_{F} p_F>0.
\end{equation}

\section{From hidden steering to one-way steering}

In this section we present a general technique for constructing a state with one-way steering, starting from a state featuring hidden steering. Before stating the general result we show how to construct a novel example of a one-way steerable state starting from a Werner state. 
	
As discussed in the main text, the local model presented by Werner \cite{werner89} is an LHS model. Hence if one party (say Bob) projects his subystem onto the qubit subspace $\{\ket{0},\ket{1}\}$, \ie applying the filter $F_{01}= \proj{0}+\proj{1}$, the filtered state 
\begin{equation}
	\rho_W^{F}= \frac{1}{N} (\mathbb{I}_d \otimes  F_{01}) \rho_W   (\mathbb{I}_d \otimes  F_{01}),
\end{equation}
where $N= \tr[(\mathbb{I}_d \otimes  F_{01}) \rho_W   (\mathbb{I}_d \otimes  F_{01})]$, is unsteerable from Alice to Bob. However, Bob can steer Alice whenever $d\geq 3$. This follows from the fact that, if Alice now also projects her subsystem onto the qubit subspace $\{\ket{0},\ket{1}\}$, the final state is a two qubit Werner state with visibility greater than $1/2$, which is steerable. 

More generally, we have the following result.

\begin{lemma} \label{HS=oneway}
Consider a state $\rho$ such that Alice cannot steer Bob, but Bob can steer Alice with the filtered state 
\begin{equation}
\rho'= \frac{(F_A\otimes F_B) \rho (F_A^{\dagger}\otimes F_B^{\dagger})}{\tr[(F_A\otimes F_B) \rho (F_A^{\dagger}\otimes F_B^{\dagger})]}.
\end{equation}
Then, the state 
\begin{equation}
\rho''= \frac{(I \otimes F_B) \rho (I \otimes F_B^{\dagger})}{\tr[(I \otimes F_B) \rho (I \otimes F_B^{\dagger})]}
\end{equation}
is one-way steerable: Alice cannot steer Bob, but Bob can steer Alice.
\end{lemma}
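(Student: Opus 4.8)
The plan is to verify the two assertions separately, each time reducing to Lemma~\ref{SLOCC} and its corollary rather than manipulating assemblages directly. For the first assertion, that Alice cannot steer Bob with $\rho''$: note that $\rho''$ is exactly the renormalised image of $\rho$ under the map $I\otimes\Lambda_B$ with $\Lambda_B(\cdot)=F_B(\cdot)F_B^{\dagger}$, which is a trace non-increasing CP map on Bob's system since the local filter $F_B$ satisfies $F_B^{\dagger}F_B\leq\mathbb{I}$. As $\rho$ is unsteerable from Alice to Bob by hypothesis, Lemma~\ref{SLOCC} applies directly and yields that $\rho''$ is unsteerable from Alice to Bob.

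For the second assertion, that Bob can steer Alice with $\rho''$, I would invoke the corollary to Lemma~\ref{SLOCC} with the roles of the two parties exchanged. First observe that applying the local filter $F_A$ on Alice's side to $\rho''$ reproduces $\rho'$, since
\begin{equation}
(F_A\otimes I)\,\rho''\,(F_A^{\dagger}\otimes I)\ \propto\ (F_A\otimes F_B)\,\rho\,(F_A^{\dagger}\otimes F_B^{\dagger})\ \propto\ \rho',
\end{equation}
the proportionality constants being positive scalars immaterial to steerability (they also show that $\rho''$ is entangled, as filtering cannot create the entanglement present in $\rho'$). By hypothesis $\rho'$ is steerable from Bob to Alice; the corollary to Lemma~\ref{SLOCC}, read with Alice and Bob interchanged, says that a local operation on Alice's side cannot render a state steerable from Bob to Alice if it was not already so. Hence $\rho''$ must be steerable from Bob to Alice. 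Together with the first part this establishes that $\rho''$ is one-way steerable.

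The argument is essentially a bookkeeping exercise on top of Lemma~\ref{SLOCC}; the only points needing care are the normalisation cancellations that make the displayed proportionalities exact, and the check that $F_A,F_B$ genuinely lie in the class of trace non-increasing CP maps covered by Lemma~\ref{SLOCC}. If one additionally wants an explicit steering inequality violated by $\rho''$, rather than a bare existence statement, one can pull the inequality witnessing steerability of $\rho'$ back through the filter $\Lambda_A(\cdot)=F_A(\cdot)F_A^{\dagger}$ using the construction of Lemma~\ref{SLOCC2}.
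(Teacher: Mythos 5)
Your proposal is correct and follows essentially the same route as the paper: the paper's proof also consists of two invocations of Lemma~\ref{SLOCC} (first directly, to conclude $\rho''$ is unsteerable from Alice to Bob; then with the parties' roles exchanged, via its corollary, using that $\rho'$ is obtained from $\rho''$ by the local filter $F_A$ on Alice's side). Your version merely spells out the normalisation bookkeeping and the optional inequality construction via Lemma~\ref{SLOCC2}, which the paper leaves implicit.
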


\begin{proof}
It follows directly from lemma 2 that Alice cannot steer Bob with the state $\rho''$. It also follows from lemma 2 that if Bob can steer Alice with $\rho'$, he can steer Alice with $\rho''$.
\end{proof}

Finally, note that if $\rho$ has hidden steering for projective measurements, the state $\rho''$ is one-way steerable for projective measurements. Whereas, if $\rho$ has hidden steering for POVMs, the state $\rho''$ is one-way steerable for POVMs.


\providecommand{\href}[2]{#2}\begingroup\raggedright\endgroup


\end{document}